\newtheorem{observation}[theorem]{Observation}
\newcommand{\ceil}[1]{\lceil #1 \rceil}
\newcommand{\floor}[1]{\lfloor #1 \rfloor}
\newcommand{\dist}{\operatorname{dist}}
\newcommand{\CP}{\textsf{Closest Pair}}
\newcommand{\CPD}{\textsf{Closest Pair Decision}}
\newcommand{\linfty}{\mathrm L_\infty}
\def\reals{{\mathbb R}}
\newcommand{\Matousek}{Matou\v{s}ek}
\title{Dominance Product and High-Dimensional Closest Pair under $L_\infty$\footnote{Work on this paper has been supported 
		by Grant 892/13 from the Israel Science Foundation, 
		by Grant 2012/229 from the U.S.-Israeli Binational Science Foundation,
		by the Israeli Centers of Research Excellence (I-CORE)
		program (Center No.~4/11),
		and by the Hermann Minkowski--MINERVA Center for Geometry at Tel Aviv
		University.}}
\titlerunning{Dominance Product and High-Dimensional Closest Pair under $L_\infty$} 
\author[1]{Omer Gold}
\author[2]{Micha Sharir}
\affil[1]{Blavatnik School of Computer Science, Tel Aviv University,\\
	Tel Aviv 69978, Israel\\
  \texttt{omergold@post.tau.ac.il}}
\affil[2]{Blavatnik School of Computer Science, Tel Aviv University,\\
	Tel Aviv 69978, Israel\\
  \texttt{michas@post.tau.ac.il}}
\authorrunning{O. Gold and M. Sharir} 
\subjclass{F.2.2 Nonnumerical Algorithms and Problems}
\keywords{Closest Pair, Dominance Product, $L_\infty$ Proximity}
\begin{document}

\maketitle

\begin{abstract}
Given a set $S$ of $n$ points in $\mathbb{R}^d$, the \CP{} problem is to find a pair of distinct points in $S$ at minimum distance.   
When $d$ is constant, there are efficient algorithms that solve this problem, and fast approximate solutions for general $d$.
However, obtaining an {\em exact} solution in very high dimensions seems to be much less understood.
We consider the high-dimensional $\linfty$ \CP{} problem, where $d=n^r$ for some $r > 0$, and the underlying metric is $L_\infty$.     

We improve and simplify previous results for $\linfty$ \CP{}, showing that it can be solved by a deterministic strongly-polynomial algorithm 
that runs in $O(DP(n,d)\log n)$ time, and by a randomized algorithm that runs in $O(DP(n,d))$ expected time,
where $DP(n,d)$ is the time bound for computing the {\em dominance product} for $n$ points in $\reals^d$.
That is a matrix $D$, such that
$D[i,j] = \bigl| \{k \mid p_i[k] \leq p_j[k]\} \bigr|$; this is the number of coordinates at which $p_j$ dominates $p_i$.
For {\em integer} coordinates from some interval $[-M, M]$, we obtain an algorithm that runs in 
$\tilde{O}\left(\min\{Mn^{\omega(1,r,1)},\, DP(n,d)\}\right)$ time\footnote{The $\tilde{O}(\cdot)$ notation hides poly-logarithmic factors.},
where $\omega(1,r,1)$ is the exponent of multiplying an $n \times n^r$ matrix by an $n^r \times n$ matrix.

We also give slightly better bounds for $DP(n,d)$, by using more recent rectangular matrix multiplication bounds. 
Computing the dominance product itself is an important task, since it is applied in many algorithms as a major black-box ingredient, such as algorithms for APBP (all pairs bottleneck paths),
and variants of APSP (all pairs shortest paths).
 \end{abstract}

\section{Introduction}\label{sec:intro}
Finding the closest pair among a set of $n$ points in $\reals^d$ was among the first studied algorithmic geometric problems, considered at the origins of computational geometry; see~\cite{Shamos75, Preparata85}.
The distance between pairs of points is often measured by the $L_\tau$ metric, for some $1\leq \tau \leq \infty$, under which the distance between the points
$p_i= (p_i[1], \ldots, p_i[d])$ and $p_j=(p_j[1],\ldots, p_j[d])$ is
$
\dist_\tau(p_i, p_j) = \| p_i - p_j \|_\tau = \left(\sum^{d}_{k=1}{\bigl| p_i[k] - p_j[k] \bigr|^\tau}\right)^{1/ \tau},
$
for $\tau < \infty$, and
$
\dist_\infty(p_i, p_j) = \| p_i - p_j \|_\infty = \max_{k} {\bigl| p_i[k] - p_j[k]\bigr|}, 
$ for $\tau =\infty$. 
The \CP{} problem and its corresponding {\em decision} variant, under the $L_\tau$-metric, are defined as follows.
\setdescription{leftmargin=1.5cm,labelindent=\parindent}
\begin{description}
	\item[{\bf \CP:}]
	Given a set $S$ of $n$ points in $\reals^d$, find a pair of distinct points $p_i ,p_j \in S$ such that
	$\dist_\tau(p_i, p_j) = \min_{\ell \neq m}\{ \dist_\tau(p_\ell, p_m) \mid  p_\ell, p_m \in S \}$.
	\item[{\bf \CPD:}]
	Given a set $S$ of $n$ points in $\reals^d$, and a parameter $\delta > 0$, determine whether there is a pair of distinct points $p_i ,p_j \in S$
	such that $\dist_\tau(p_i, p_j) \leq \delta $.
\end{description}
Throughout the paper, the notation $\mathrm{L_\tau}$ \CP{} refers to the \CP{} problem under some {\em specific} metric $L_\tau$, for $1 \leq \tau \leq \infty$
(and we will mostly consider the case $\tau= \infty$).

In the {\em algebraic computation tree model} (see~\cite{Ben-Or83}), the \CP{} problem has a complexity lower bound of $\Omega(n\log n)$ (for any $L_\tau$ metric),
even for the one-dimensional case $d=1$,
as implied from a lower bound for the Element-Uniqueness problem~\cite{Ben-Or83}.

As for upper bounds, Bentley and Shamos~\cite{Bentley76, Bentley80}
were the first who gave a deterministic algorithm for finding the closest pair under the $L_2$ metric that runs in $O(n\log n)$ time for any {\em constant} dimension $d \geq 1$, 
which is optimal in the algebraic computation tree model, for any fixed $d$.
Their algorithm uses the {\em divide-and-conquer} paradigm, and became since, a classical textbook example for this technique.
In 1976 Rabin presented, in a seminal paper~\cite{Rabin76}, a {\em randomized} algorithm that finds the closest pair in $O(n)$ expected time,
using the {\em floor} function (which is not included in the algebraic computation tree model).
His algorithm uses random sampling to decompose the problem into smaller subproblems, and uses the floor function in solving them, for a total cost of $O(n)$ expected time.
Later, in 1979, Fortune and Hopcroft~\cite{Fortune79} gave a deterministic algorithm that uses the floor function, and runs in $O(n\log\log n)$ time.

The bounds above hold as long as the dimension $d$ is constant, as they involve factors that are exponential in $d$. Thus, when $d$ is large (e.g., $d=n$), the problem seems to be much less understood.
Shamos and Bentley~\cite{Bentley76} conjectured in 1976 that, for $d=n$, and under the $L_2$ metric, the problem can be solved in $O(n^2\log n)$ time;
so far, their conjectured bound is considerably far from the $O(n^\omega)$ state-of-the-art time bound for this case~\cite{Indyk04},
where $\omega < 2.373$ denotes the exponent for matrix multiplication (see below).
If one settles on approximate solutions, many efficient algorithms were developed in the last two decades, 
mostly based on LSH (locality sensitive hashing) schemes,
and dimensionality reduction via the Johnson-Lindenstrauss transform;
see~\cite{Andoni08,Ailon09} for examples of such algorithms.
These algorithms are often used for finding {\em approximate nearest neighbors}, which itself is of major importance and in massive use in many practical fields of computer science.
Nevertheless, finding an $exact$ solution seems to be a much harder task.

We consider the case where $d$ depends on $n$, i.e., $d=n^r$ for some $ r > 0 $.
Note that a naive brute-force algorithm runs in $O(n^2 d)$ time and works for any metric $L_\tau$.
For some $L_\tau$ metrics, a much faster solution is known; see~\cite{Indyk04}.
Specifically, the $\mathrm{L_2}$ \CP{} problem can be solved by one algebraic matrix multiplication, so for example when $d=n$,
it can be solved in $O(n^\omega)$ time (as already mentioned above).
If $\tau \geq 2$ is an {\em even} integer, then $L_\tau$ \CP{} can be solved in $O(\tau n^\omega)$ time.
However, for other $L_\tau$ metrics, such as when $\tau$ is {\em odd} (or fractional), or  the $L_\infty$ metric, the known solutions are significantly inferior.

For the $L_1$ and $L_\infty$ metrics, Indyk {\em et al.}~\cite{Indyk04} obtained the first (and best known until now) non-naive algorithms for the case $d = n$.
For $L_1$, they gave an algorithm that runs in $O\left(n^{\frac{\omega+3}{2}}\right) = O(n^{2.687})$ time,
and for $L_\infty$, one that runs in $O\left(n^{\frac{\omega+3}{2}}\log D\right) = O(n^{2.687}\log D)$ time,
where $D$ is the diameter of the given point-set. The bound for $L_\infty$ is {\em weakly polynomial}, due to the dependence on $D$,
and, for real data, only yields an approximation. 
Their paper is perhaps the most related to our work.

Our new approach is based on two main observations. The first is showing a reduction from $\linfty$ \CPD{} to another well-studied problem,
{\em dominance product}. The second is by showing we can solve the optimization problem deterministically by executing the decision procedure only $O(\log n)$ times.

We also give improved runtime analysis for the dominance product problem, defined as follows.
\setdescription{leftmargin=1.5cm,labelindent=\parindent}
\begin{description}
	\item[{\bf Dominance Product:}]
	given a set $S$ of $n$ points ${p_1, \ldots, p_n}$ in $\mathbb{R}^d$, compute a matrix $D$ such that for each $i, j \in[n]$,
	$
	D[i,j] = \Bigl| \{k \mid p_i[k] \leq p_j[k]\} \Bigr|. 
	$
\end{description} 
This matrix is called the {\em dominance product} or {\em dominance matrix} for $S$.
For $d = n$, there is a non-trivial strongly subcubic algorithm by \Matousek~\cite{Matousek91} (see Section~\ref{sec:dominance}), and a slightly improved one by Yuster~\cite{Yuster09}.
For $d \leq n$, there are extensions of \Matousek's algorithm by Vassilevska-Williams, Williams, and Yuster~\cite{VassilevskaWY09}.
All of them use fast matrix multiplications.

Dominance product computations were liberally used to improve some fundamental algorithmic problems. For example,
Vassilevska-Williams, Williams, and Yuster~\cite{VassilevskaWY09},
give the first strongly subcubic algorithm for the {\em all pairs bottleneck paths} (APBP) problem, using dominance product computations.
Duan and Pettie~\cite{DuanP09} later improved their algorithm, also by using dominance product computations,
in fact, their time bound for $(\max,\, \min)$-product match the current time bound of computing the dominance product of $n$ points in $\reals^n$.
Yuster~\cite{Yuster09} showed that APSP can be solved in strongly subcubic time if the number of distinct weights of edges emanating from
any fixed vertex is $O(n^{0.338})$. In his algorithm, he uses dominance product computation as a black box.


\subsection{Preliminaries}\label{sec:pre}
We review some notations that we will use throughout the paper.
We denote by $[N] = \{1,\ldots, \ceil{N} \}$, the set of the first $\ceil{N}$ natural numbers succeeding zero, for any $N\in \reals^+$.
For a point $p \in \reals^d$, we denote by $p[k]$ the $k$-th coordinate of $p$, for $k\in [d]$.
For a matrix $A$, we denote the {\em transpose} of $A$ by $A^T$.
The $\tilde{O}(\cdot)$ notation hides poly-logarithmic factors.

Most of the algorithms discussed in this paper heavily rely on fast matrix multiplication algorithms.
Throughout the paper, $\omega < 2.373$ denotes the exponent of multiplying two $n\times n$ matrices~\cite{Williams12, Gall14}, and
$\omega(1,r,1)$ refers to the exponent of multiplying an $n\times n^r$ matrix by an $n^r \times n$ matrix, for some $r > 0$; see~\cite{Huang98, Gall12}. 
For more details on rectangular matrix multiplication exponents,
we refer the reader to the seminal work of Huang and Pan~\cite{Huang98}, and to a more recent work of Le Gall~\cite{Gall12,Gall12Arxiv}.

\subsection{Our Results}\label{sec:results}
Let $DP(n,d)$ denote the runtime order for computing the dominance product (defined above) of $n$ points in $\reals^d$.
We obtain the following results for the $\linfty$ \CP{} problem in $\reals^d$, where $d=n^r$, for some $r > 0$.
\begin{theorem}\label{thm:deterministic}
	$\linfty$ {\textnormal \CP{}} can be solved by a deterministic algorithm that runs in $O(DP(n,d)\log n)$ time.
\end{theorem}
Theorem~\ref{thm:deterministic} improves the $O(n^{2.687}\log D)$ time bound of Indyk {\em et al.}~\cite{Indyk04} (see above) in two aspects,
first is that the polynomial factor $n^{2.687}$ goes slightly down to $DP(n,n)=n^{2.684}$, which we then improve further to $n^{2.6598}$
in Theorem~\ref{thm:dominance}; this holds also for Theorem~\ref{thm:randomized}, stated below.
The second aspect is that the $\log D$ factor is replaced by a $\log n$ factor, which makes our algorithm strongly-polynomial,
independent of the diameter of the given point-set.

For the proof of Theorem~\ref{thm:deterministic}, we first show a reduction from $\linfty$ \CPD{} to dominance product computation, then
we show that the optimization problem can be cleverly solved deterministically by executing the decision procedure only $O(\log n)$ times.
\begin{theorem}\label{thm:randomized}
	$\linfty$ {\textnormal \CP{}} can be solved by a randomized algorithm that runs in $O(DP(n,d))$ expected time.
\end{theorem}
\begin{theorem}\label{thm:integers}
	For points with integer coordinates from $[-M,M]$,	$\linfty$ {\textnormal \CP{}} can be solved by a deterministic algorithm that runs in $\tilde{O}\left(\min\{Mn^{\omega(1,r,1)},\, DP(n,d) \} \right)$ time.
\end{theorem}
From Theorem~\ref{thm:integers} we obtain that for $n$ points in $\reals^n$ with small integer coordinates we can solve the
{\em optimization} problem in $O(n^\omega)$ time,
which is a significant improvement compared to the general case from Theorems~\ref{thm:deterministic} and~\ref{thm:randomized}.

Additionally, in Theorem~\ref{thm:dominance} we give improved bounds for $DP(n,d)$. 
\begin{theorem}\label{thm:dominance}
given a set $S$ of $n$ points ${p_1, \ldots, p_n}$ in $\mathbb{R}^d$,
their dominance product can be computed in $O(DP(n,d))$ time, where 
\begin{align*}\label{DP}
&DP(n,d) \leq
\begin{cases}
d^{0.697}n^{1.896} + n^{2+o(1)}  &\text{if }\, d\leq n^{\frac{\omega - 1}{2}} \leq n^{0.687} \\
d^{0.909}n^{1.75} &\text{if } n^{0.687} \leq d \leq n^{0.87} \\
d^{0.921}n^{1.739} &\text{if } n^{0.87} \leq d \leq n^{0.963} \\
d^{0.931}n^{1.73} &\text{if } n^{0.963} \leq d \leq n^{1.056}
\end{cases}
\end{align*} 	
\end{theorem}
In particular, we obtain that $DP(n,n) = n^{2.6598}$, which improves Yuster's $O(n^{2.684})$ time bound. 
As mentioned above, these bounds will slightly improve the time bounds for algorithms that use dominance product computation as a bottleneck step
(see some examples above). In the rest of the paper we will often refer to the function $DP(n,d)$ above.

\section{$\linfty$ Closest Pair}\label{sec:closest_reals}
Recall that, given a set $S$ of $n$ points $p_1,\ldots, p_n$ in $\reals^d$, the $\linfty$ \CP{} problem is to find a pair of points $(p_i, p_j)$,
such that $i\neq j$ and $\left\|p_i - p_j \right\|_\infty = \min_{\ell \neq m \in[n]}{\left\|p_\ell - p_m \right\|_\infty}$.
The corresponding decision version of this problem is to determine whether there is a pair of distinct points $(p_i, p_j)$ such that $\left\|p_i - p_j \right\|_\infty \leq \delta$,
for a given $\delta >0$.

Naively, we can compute all the distances between every pair of points in $O(n^2d)$ time, and choose the smallest one.
However, as we see next, a significant improvement can be achieved, for any $d=n^r$, for any $r > 0$.

Specifically, we first obtain the following theorem.
\begin{theorem}\label{thm:decision}
	Given a parameter $\delta >0$, and a set $S$ of $n$ points $p_1,\ldots, p_n$ in $\reals^d$,
	the set of all pairs $(p_i, p_j)$ with $\left\|p_i - p_j \right\|_\infty \leq \delta$, 
	can be computed in $O(DP(n,d))$ time.
\end{theorem}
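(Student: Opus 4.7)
The plan is to reduce the task to a pair of dominance product computations on shifted copies of $S$. The starting point is the elementary two-sided characterization of $L_\infty$ balls: $\|p_i - p_j\|_\infty \leq \delta$ if and only if, for every coordinate $k \in [d]$, both of the one-sided inequalities $p_i[k] \leq p_j[k] + \delta$ and $p_i[k] \geq p_j[k] - \delta$ hold. Each such inequality is precisely the kind of predicate counted by a dominance product, which suggests encoding each half of the interval condition by shifting the ``second'' point set by $+\delta$ or $-\delta$ in every coordinate.

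Concretely, I would define the shifted copies $p_j^+ := p_j + (\delta,\ldots,\delta)$ and $p_j^- := p_j - (\delta,\ldots,\delta)$ for every $j \in [n]$, collect them into sets $S^+$ and $S^-$, and invoke the algorithm of Section~\ref{sec:dominance} on each of the two augmented point sets $S \cup S^+$ and $S \cup S^-$, each of size $2n$. From the two resulting $2n \times 2n$ dominance matrices, the relevant $n \times n$ cross-blocks yield, for every ordered pair $(i,j) \in [n]\times[n]$,
\[
D_1[i,j] = \bigl|\{k : p_i[k] \leq p_j[k]+\delta\}\bigr|, \qquad
D_2[i,j] = \bigl|\{k : p_j[k]-\delta \leq p_i[k]\}\bigr|.
\]
By the characterization above, $\|p_i-p_j\|_\infty \leq \delta$ precisely when $D_1[i,j] = d$ and $D_2[i,j] = d$, so the algorithm finishes by scanning all $\binom{n}{2}$ unordered pairs and reporting those that pass this test.

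For the running time, the two dominance product computations cost $O(DP(2n, d))$, which is $O(DP(n,d))$ because the bound in~(\ref{DP}) is polynomial in $n$. The final scan takes $O(n^2)$, and since $DP(n,d) = \Omega(n^2)$ throughout the regime covered by~(\ref{DP}), it is absorbed into the main term. I do not anticipate any real obstacle in carrying this out: the only conceptual point worth highlighting is that dominance is inherently a one-sided comparison, so a single dominance product cannot detect $L_\infty$-proximity on its own---the two-sided nature of the interval condition is exactly what forces the doubling by $S \cup S^\pm$ and the need for both $D_1$ and $D_2$. Matching the non-strict inequalities in the definition of the dominance product with the non-strict inequalities in the predicate $\|p_i-p_j\|_\infty \leq \delta$ is automatic.
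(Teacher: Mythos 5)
Your reduction is correct and is essentially the paper's approach: rewrite the $L_\infty$ ball condition as a conjunction of one-sided coordinate inequalities (a variant of Fredman's trick) and read them off from dominance matrices of suitably shifted copies of $S$. The only difference is that your second shift $S^-$ is redundant: the paper computes a single dominance matrix $D_\delta$ for $S \cup S^+$ (its $2n$ points are $p_i$ and $p_{n+i} = p_i + (\delta,\ldots,\delta)$) and observes that $D_\delta[j,\,n+i] = \bigl|\{k : p_j[k] \leq p_i[k]+\delta\}\bigr|$ already encodes your $D_2[i,j]$, since $p_j[k] - \delta \leq p_i[k]$ is the same inequality as $p_j[k] \leq p_i[k]+\delta$. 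So the symmetric entry of the one matrix you already have gives the other half of the two-sided test, and the second call to the dominance product algorithm on $S\cup S^-$ can be dropped. This is only a constant-factor savings and does not affect the stated $O(DP(n,d))$ bound.
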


\begin{proof}
	First, we note the following trivial but useful observation.
	
	\begin{observation}\label{col:fredman}
		For a pair of points $p_i, p_j \in \reals^d$,
		$\left\|p_i - p_j \right\|_\infty \leq \delta$  $\Longleftrightarrow$
		$p_i[k] \leq p_j[k] + \delta$ and $p_j[k] \leq p_i[k] + \delta$, for every coordinate $k\in[d]$.
	\end{observation} 
	
	Indeed, a pair of points $(p_i, p_j)$ satisfies $\left\|p_i - p_j \right\|_\infty = \max_{k\in [d]} {\left| p_i[k] - p_j[k]\right|}  \leq \delta$ 
	$\Longleftrightarrow$ for every coordinate $k\in[d]$,
	$\left|p_i[k] - p_j[k]\right|\leq \delta$. The last inequalities hold iff
	$p_i[k] - p_j[k] \leq \delta$ and $p_j[k] - p_i[k] \leq \delta$, or, equivalently, iff
	$p_i[k] \leq p_j[k] + \delta$ and $p_j[k] \leq p_i[k] + \delta$, for each $k\in [d]$.
	Although the rephrasing in the observation is trivial, it is crucial for our next step.
	It can be regarded as a (simple) variant of what is usually referred to as ``Fredman's trick" (see~\cite{Fredman76}).

	For every $i\in [n]$ we create a new point $p_{n+i} = p_i + (\delta, \delta, \ldots, \delta)$. 
	Thus in total, we now have $2n$ points. Concretely, for every $i\in [n]$, we have the points
	\[
	\begin{array}{ccccccc}
	p_i &= \bigl( &p_i[1], &p_i[2],&\ldots, &p_i[d] &\bigr),  \\
	p_{n+i} &= \bigl( &p_{i}[1]+\delta, &p_{i}[2]+\delta, &\ldots, &p_{i}[d] + \delta &\bigr).
	\end{array}
	\]
	We compute the dominance matrix $D_\delta$ for these $2n$ points, using the algorithm from Section~\ref{sec:improved_dominance}.
	By Observation~\ref{col:fredman}, a pair of points $(p_i, p_j)$ satisfies
	\[ \left\|p_i - p_j \right\|_\infty \leq \delta
	\Longleftrightarrow \left( D_\delta[i,n+j] = d \right) \: \wedge \: \left( D_\delta[j,n+i] =d \right),
	\]
	so we can find all these pairs in $O(n^2)$ additional time.
	Clearly, the runtime is determined by the time bound of computing the dominance matrix $D_\delta$, that is, $O(DP(n,d))$.
\end{proof}

The proof of Theorem~\ref{thm:decision} shows that solving the $\linfty$ \CPD{} is not harder than computing the dominance matrix for $n$ points in $\reals^d$.
In particular, by the decision tree complexity bound for computing dominance matrices, as discussed in Section~\ref{sec:dominance}, the following result is straightforward.
\begin{corollary}\label{cor:decision_tree}
	Given a parameter $\delta > 0$, and a set $S$ of $n$ points $p_1,\ldots, p_n$ in $\reals^d$, determining all pairs $i\neq j$ such that $\left\|p_i - p_j \right\|_\infty \leq \delta$ 
	can be done using $O(d n\log n)$ pairwise comparisons (of real numbers).
\end{corollary}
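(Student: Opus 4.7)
The plan is to chain the reduction given in Theorem~\ref{thm:decision} with the observation made at the start of Section~\ref{sec:dominance}, namely that the dominance matrix of $m$ points in $\reals^d$ can be computed using only $O(dm\log m)$ pairwise comparisons of real numbers (sort along each coordinate, then do the rest without further comparisons). Since Theorem~\ref{thm:decision} reduces the $\delta$-pair problem to a \emph{single} dominance-product computation on $2n$ points in $\reals^d$, instantiating the decision-tree upper bound on the enlarged set will immediately give the claimed $O(dn\log n)$ bound.

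First, I would replay the construction in the proof of Theorem~\ref{thm:decision}: form the augmented set $S' = \{p_1,\ldots,p_n,p_{n+1},\ldots,p_{2n}\}$, where $p_{n+i} = p_i + (\delta,\ldots,\delta)$. Producing the coordinates of $S'$ only requires additions to the inputs and $\delta$, not comparisons. Second, I would compute the dominance matrix $D_\delta$ of $S'$ in the decision-tree model as in Section~\ref{sec:dominance}: for each coordinate $k\in[d]$, sort the $2n$ relevant values along coordinate $k$ with a comparison-based sorting algorithm, spending $O(2n\log(2n))$ comparisons; summing over the $d$ coordinates gives $O(dn\log n)$ comparisons in total. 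Once the rank of every (augmented) point in every coordinate is known, each entry $D_\delta[i,j]$ can be evaluated using only integer arithmetic on those ranks, without consulting any further real values.

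Finally, the criterion from the proof of Theorem~\ref{thm:decision} states that $\|p_i-p_j\|_\infty\le \delta$ iff $D_\delta[i,n+j]=D_\delta[j,n+i]=d$. This test compares integer counts against the integer $d$, not real numbers, and therefore costs zero real-number comparisons. Reporting all qualifying pairs then fits within the same $O(dn\log n)$ comparison budget.

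I do not anticipate a real obstacle. The only point that warrants a line of justification is that, in the sorting step, the values being compared may be of the form $p_i[k]$ or $p_i[k]+\delta$; both are real numbers, and each sorting operation is still a single \emph{pairwise} comparison of two such numbers, so the comparison count is unchanged. The corollary is thus essentially a decision-tree accounting of the reduction already established in Theorem~\ref{thm:decision}.
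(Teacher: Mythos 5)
Your proposal is correct and matches the paper's argument: the paper also obtains the corollary by combining the reduction of Theorem~\ref{thm:decision} (reduce to a single dominance-matrix computation on $2n$ points in $\reals^d$) with the sorting-based $O(dn\log n)$-comparison decision-tree bound for dominance matrices stated at the start of Section~\ref{sec:dominance}. You have merely spelled out the details that the paper leaves implicit, including the (correct) observation that the comparisons involve values of the form $p_i[k]$ or $p_i[k]+\delta$.
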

By  Corollary~\ref{cor:decision_tree}, we obtain that the 2-linear decision tree complexity for the $\linfty$ \CPD{} problem is $O(d n\log n)$.
This bound matches a special case of an old conjectured algorithmic complexity bound by Shamos and Bentley (see Section~\ref{sec:intro}, and~\cite{Bentley76}).

\subsection{Solving the Optimization Problem}
The algorithm from Theorem~\ref{thm:decision} solves the $\linfty$ \CPD{} problem.
It actually gives a stronger result, as it finds {\em all} pairs of points $(p_i, p_j)$ such that $\left\|p_i - p_j \right\|_\infty \leq \delta$.
We use this algorithm in order to solve the optimization problem $\linfty$ \CP{}.

As a ``quick and dirty" solution, one can solve the optimization problem by using the algorithm from Theorem~\ref{thm:decision} 
to guide a binary search over the diameter $W$ of the input point set, which is at most twice the largest absolute value of the 
coordinates of the input points.
If the coordinates are integers then we need to invoke the algorithm from Theorem~\ref{thm:decision} $O(\log W)$ times.
If the coordinates are reals, we invoke it $O(B)$ times for $B$ bits of precision.
However, the dependence on $W$ makes this method weakly polynomial, and, for real data, only yields an approximation. 
As we show next, this naive approach can be replaced by strongly-polynomial algorithms, A deterministic one that runs in $O(DP(n,d)\log n)$ time, 
and a randomized one that runs in $O(DP(n,d))$ expected time.

\subparagraph*{Deterministic strongly-polynomial algorithm.}

\begin{theorem}
Given a set $S$ of $n$ points $p_1,\ldots, p_n$ in $\reals^d$, the $\linfty$ {\rm \CP{}} problem can be solved for $S$ in $O(DP(n,d) \log n)$ time.
\end{theorem}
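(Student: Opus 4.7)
The plan is to reduce the optimization version of \CP{} to $O(\log n)$ invocations of the decision procedure from Theorem~\ref{thm:decision}, driven by a binary search over the finite combinatorial set of candidate distances. The resulting algorithm will be strongly polynomial because the search space has size polynomial in $n$ and $d$, so the number of rounds is $O(\log n)$ independently of the magnitude or precision of the input coordinates.

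First I would observe that $\delta^* = \max_k |p_{i^*}[k]-p_{j^*}[k]|$ for the optimal pair $(p_{i^*},p_{j^*})$, so $\delta^*$ belongs to the candidate set
\[
\Delta := \bigl\{|p_i[k]-p_j[k]| : 1 \le i<j \le n,\ k \in [d]\bigr\}, \qquad |\Delta| \le \binom{n}{2}d.
\]
Sorting each of the $d$ coordinates in $O(nd\log n)$ total time (absorbed by $DP(n,d)$) yields, for each $k$, a sorted list $q_1^k\le\cdots\le q_n^k$. The candidates contributed by coordinate $k$ are then the upper-triangular entries $M_k[i,j]=q_j^k-q_i^k$ of a matrix that is increasing along each row and decreasing down each column, so $\Delta$ is represented implicitly as the union of $d$ sorted matrices, without ever being materialized.

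Next, I would binary-search for the smallest element of $\Delta$ on which the decision procedure of Theorem~\ref{thm:decision} reports a pair within that distance. A half-open interval $I=(\alpha,\beta]$ known to contain $\delta^*$ is maintained, initialized to $(-\infty,+\infty)$. In each round one uses the Frederickson--Johnson selection algorithm for sorted matrices, adapted to work across the $d$ implicit matrices $M_k$ restricted to $I$, to find the median of $\Delta\cap I$ in $O(nd)$ time; calls the decision procedure at that median in $O(DP(n,d))$ time; and shrinks $I$ to whichever half of $\Delta\cap I$ still contains $\delta^*$ (lower half on a ``yes'' answer, upper half on ``no''). After $O(\log|\Delta|)=O(\log n)$ rounds $\Delta\cap I$ collapses to a single value, namely $\delta^*$, and one final decision call reports the witnessing pair.

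Each round costs $O(nd + DP(n,d)) = O(DP(n,d))$, since $DP(n,d) \ge n^2$ just to write its output and $d$ is polynomial in $n$. Summing over $O(\log n)$ rounds yields the claimed $O(DP(n,d)\log n)$ bound. The main obstacle is the implicit median-selection step: one must exploit the sortedness of each $M_k$ together with the interval restriction (whose active cells in $M_k$ form a ``staircase'' region) to carry out the selection in $O(n)$ time per coordinate, avoiding the $\Theta(n^2 d)$ cost of ever listing $\Delta$. Everything else is a routine binary-search-on-an-oracle argument.
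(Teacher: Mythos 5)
Your proposal is correct, and it achieves the $O(DP(n,d)\log n)$ bound by a route that is recognizably parallel to the paper's but differs in the key technical step. Both proofs start from the same observation: $\delta_0$ lies in a combinatorial candidate set of size $O(n^2 d)$, and for each coordinate $k$ the candidate differences, once the $k$-th coordinates are sorted, form a sorted (monotone) structure. Both therefore drive a binary search over this candidate set by the decision procedure of Theorem~\ref{thm:decision}. Where you diverge is in how you extract a good pivot without enumerating $\Delta$. The paper maintains, per $(i,k)$, an active index interval $[i^-,i^+]$, forms the $O(nd)$ midpoints $q_{i_m}[k]-q_i[k]$, and first does a nested binary search over those midpoints (giving $O(DP(n,d)\log^2 n)$), then shaves the extra $\log n$ with Cole's weighted-median parametric-search trick, where the interval sizes serve as weights and each decision call is guaranteed to halve a weighted constant fraction of the intervals. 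You instead treat each coordinate's differences as an explicit sorted matrix $M_k[i,j]=q_j^k-q_i^k$ and invoke Frederickson--Johnson selection over the collection of $d$ such matrices to find the \emph{exact} median of the surviving candidate set in one round, so a single decision call per round halves $|\Delta\cap I|$ and $O(\log n)$ rounds suffice outright, with no weighted-median bookkeeping. Your approach is conceptually cleaner and avoids Cole's machinery; the one place that deserves more care is the claimed $O(n)$-per-coordinate selection under the staircase (interval) restriction --- one cleaner way to phrase this is to select by global rank rather than restricting to $I$, since one can compute in $O(nd)$ time how many candidates lie at or below $\alpha$ and at or below $\beta$ and then ask the unconstrained selection routine for the appropriate rank. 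In any case, $DP(n,d)$ dominates $nd$ by a polynomial factor throughout the relevant range of $d=n^r$, so even a polylogarithmic overhead in the selection subroutine would be absorbed, and the overall $O(DP(n,d)\log n)$ bound stands.
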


\begin{proof}
Since the distance between the closest pair of points, say $p_i, p_j$, is
\[
\delta_0 = \left\|p_i - p_j \right\|_\infty = \max_{k\in [d]} {\bigl| p_i[k] - p_j[k]\bigr|},
\]
it is one of the $O(n^2d)$ values $p_\ell[k] - p_m[k]$,  $\ell, m \in [n]$, $k\in [d]$.
Our goal is to somehow search through these values, using the decision procedure (i.e., the algorithm from Theorem~\ref{thm:decision}).
However, enumerating all these values takes $\Omega(n^2d)$ time, which is too expensive, and pointless anyway,
since by having them, the closest pair can be found immediately.
Instead, we proceed in the following more efficient manner.

For each $k\in [d]$, we sort the points of $S$ in increasing order of their $k$-th coordinate. 
This takes $O(nd\log n)$ time in total. 
Let $\left(p^{(k)}_1,\ldots, p^{(k)}_n \right)$ denote the sequence of the points of $S$ sorted in increasing order of their $k$-th coordinate.
For each $k$, let $M^{(k)}$ be an $n\times n$ matrix, so that for $i,j\in [n]$, we have 
\[
M^{(k)}[i,j] = p^{(k)}_i[k] - p^{(k)}_j[k].
\]
We are in fact interested only in the upper triangular portion of $M^{(k)}$, where its elements are positive, but for simplicity of presentation, we ignore this issue.
(We view the row indices from bottom to top, i.e., the first row is the bottommost one, and the column indices from left to right.)

Observe that each row of $M^{(k)}$ is sorted in decreasing order and each column is sorted in increasing order.
Under these conditions, the selection algorithm of Frederickson and Johnson~\cite{FREDERICKSON} 
can find the $t$-largest element of $M^{(k)}$, for any $1\leq t \leq n^2$, in $O(n)$ time.\footnote{Simpler algorithms can select the $t$-largest element in such cases in $O(n\log n)$ time, which is also sufficient for our approach.}
(Note that we do not need to explicitly construct the matrices $M^{(k)}$, this will be too expensive.
The bound of Frederickson-Johnson's algorithm holds as long as each entry of $M^{(k)}$ is accessible in $O(1)$ time, like in our case.)

We use this method to conduct a simultaneous binary search over all $d$ matrices $M^{(k)}$ to find $\delta_0$.
At each step of the search we maintain two counters $L_k \leq H_k$, for each $k$.
Initially $L_k = 1$ and $H_k = n^2$.
The invariant that we maintain is that, at each step, $\delta_0$ lies in between the $L_k$-th and the $H_k$-th largest elements of $M^{(k)}$, for each $k$.

Each binary search step is performed as follows.
We compute $r_k = \floor{(L_k + H_k)/2}$, for each $k$, and apply the  Frederickson-Johnson algorithm to retrieve the $r_k$-th largest element of $M^{(k)}$,
which we denote as $\delta_k$, in total time $O(nd)$.
We give $\delta_k$ the weight $H_k - L_k +1$, and compute the weighted median $\delta_{\text{med}}$ of $\{\delta_1, \ldots, \delta_d\}$.
We run the $\linfty$ \CPD{} procedure of Theorem~\ref{thm:decision} on $\delta_{\text{med}}$.
Suppose that it determines that $\delta_0 \leq \delta_{\text{med}}$. Then for each $k$ for which $\delta_k \geq \delta_{\text{med}}$ we know that
$\delta_0 \leq \delta_k$, so we set $H_k := r_k$ and leave $L_k$ unchanged.
Symmetric actions are taken if $\delta_0 > \delta_{\text{med}}$.
In either case, we remove roughly one quarter of the candidate differences;
that is, the sum $\sum_{k\in [d]}{\left(H_k - L_k + 1 \right)}$ decreases by roughly a factor of $3/4$.
Hence, after $O(\log n)$ steps, the sum becomes $O(d)$, and a straightforward binary search through the remaining values finds $\delta_0$.
The overall running time is
\[
O(nd\log n + DP(n,d)(\log n + \log d)).
\]
Since in our setting $d$ is polynomial in $n$, and $nd \ll DP(n,d)$, we obtain that the overall runtime is 	$O(DP(n,d)\log n)$.
This completes the proof of Theorem~\ref{thm:deterministic}.
\end{proof}

\subparagraph*{Randomized algorithm.}
Using randomization, we can improve the time bound of the preceding deterministic algorithm to equal the time bound of computing the dominance product $O(DP(n,d))$ in expectation.
This can be done by using a randomized optimization technique by Chan~\cite{Chan1999}. Among the problems for which this technique can be applied, Chan specifically addresses the \CP{} problem. 
\begin{theorem}[Chan~\cite{Chan1999}]\label{thm:Chan}
	Let $U$ be a collection of objects.	If the {\textnormal \CPD{}} problem can be solved in $O(T(n))$ time, for an arbitrary distance function $d: U\times U \rightarrow \reals$,
	then the {\textnormal \CP{}} problem can be solved in $O(T(n))$ expected time, assuming that $T(n)/n$ is monotone increasing.
\end{theorem}
We refer the reader to~\cite{Chan1999}, for the proof of Theorem~\ref{thm:Chan}.
By Theorem~\ref{thm:decision}, $\linfty$ \CPD{} can be solved in $O(DP(n,d))$ time.
Clearly, $DP(n,d)/n$ is monotone increasing in $n$. Hence, by Theorem~\ref{thm:Chan},
we obtain a randomized algorithm for $\linfty$ \CP{} that runs in $O(DP(n,d))$ expected time, as stated in Theorem~\ref{thm:randomized}.

\section{$\linfty$ Closest Pair with Integer Coordinates}\label{sec:integers}
A considerable part of the algorithm from the previous section is the reduction to computing a suitable dominance matrix.
The algorithms for computing dominance matrices given in Section~\ref{sec:dominance} do not make any assumptions on the coordinates of the points, and support real numbers.
When the coordinates are bounded integers, we can improve the algorithms. In particular, for $n$ points in $\reals^n$ with small integer coordinates we can solve the
{\em optimization} problem in $O(n^\omega)$ time,
which is a significant improvement compared to the $O(n^{2.6598})$ time bound of our previous algorithm for this case\footnote{For integer coordinates that are bounded by a constant, the $L_\infty$-diameter of the points is also a constant (bounded by twice the largest coordinate), hence,
one can use the decision procedure to (naively) guide a binary search over the diameter in constant time.}.
Our improvement is based on techniques for computing $(\min, +)$-matrix multiplication over integer-valued matrices.

\begin{theorem}\label{thm:closest_integers}
Let $S$ be a set of $n$ points $p_1,\ldots, p_n$ in $\reals^d$ such that $d=n^r$ for some $r > 0$,
and for all $i\in[n]$, $k\in [d]$, $p_i[k]$ is an integer in $[-M,M]$.
Then the $L_\infty$ closest pair can be computed in
\[
\tilde{O}\left(\min\left\{Mn^{\omega(1,r,1)},\, DP(n,d) \right\}\right) \text{  time}.
\]
\end{theorem}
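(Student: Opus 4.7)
The plan is to establish the two bounds inside the minimum separately, and then run both algorithms in parallel (or choose based on $M$), returning the faster answer. The bound $\tilde{O}(DP(n,d))$ is immediate: the algorithm of the preceding section makes no use of the integrality of the coordinates, so it already solves the problem in this time. The real work is to establish the second bound, $\tilde{O}(Mn^{\omega(1,r,1)})$, via a reduction of $\linfty$ \CP{} to a bounded-entry $(\max,+)$ rectangular matrix product.

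The reduction rests on the identity
\[
\|p_i - p_j\|_\infty = \max\!\left(\max_{k\in[d]}(p_i[k] - p_j[k]),\ \max_{k\in[d]}(p_j[k] - p_i[k])\right).
\]
Let $A$ denote the $n \times d$ matrix whose $i$-th row is $p_i$, and $B$ the $d \times n$ matrix whose $j$-th column is $-p_j$. The $(\max,+)$ matrix product $C$, with $C[i,j] = \max_{k}(A[i,k] + B[k,j])$, then satisfies $C[i,j] = \max_k(p_i[k] - p_j[k])$, so $\|p_i - p_j\|_\infty = \max(C[i,j], C[j,i])$, and the closest pair is obtained by an $O(n^2)$ scan of $C$.

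To compute $C$ within the target budget, I would invoke the standard small-integer $(\min,+)$ technique, symmetrically applied to $(\max,+)$. First shift the entries of $A$ and $B$ by $M$ so that they lie in $[0,2M]$ (the shift is undone at the end). Then encode each entry $a$ by the monomial $x^a$ and multiply the resulting polynomial matrices using ordinary (not tropical) rectangular matrix multiplication. The entry $(i,j)$ of the product is $\sum_k x^{A[i,k]+B[k,j]}$, whose largest exponent with non-zero coefficient is exactly the desired $(\max,+)$ entry. Every polynomial in the product has degree at most $4M$, so the polynomial matrix product reduces to $O(M)$ scalar rectangular multiplications of $n \times d$ by $d \times n$ matrices, for instance by evaluating the input polynomials at $\Theta(M)$ distinct points, multiplying, and interpolating. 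This yields the claimed $\tilde{O}(Mn^{\omega(1,r,1)})$ running time.

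The main obstacle is the $M$-dependence: the polynomial-encoding trick is efficient only when $M$ is modest, which is precisely why the theorem takes a minimum with $DP(n,d)$ rather than simply replacing it. A minor technical point is that the $\tilde{O}$ must absorb the polylog FFT and bit-complexity overhead; this is routine. For $M$ small enough that $Mn^{\omega(1,r,1)}$ lies below the dominance-product bound, the tropical route wins; otherwise, the algorithm of the preceding section is preferred, and running both in parallel captures the minimum.
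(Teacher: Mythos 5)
Your proposal is correct and follows the same overall structure as the paper: the identity $\|p_i - p_j\|_\infty = \max(C[i,j], C[j,i])$ with $C$ the $(\max,+)$-product of the $n\times d$ point matrix $A$ with $(-A)^T$, then taking the minimum with the dominance-product bound from the previous section. The only difference is in the implementation of the bounded-integer tropical product: the paper follows Alon--Galil--Margalit and Zwick, encoding entry $a$ as the large integer $(m+1)^{M-a}$, performing a single rectangular matrix multiplication over $O(M\log n)$-bit integers, and recovering the $\min$ exponent by a logarithm; you instead encode $a$ as the monomial $x^a$, multiply polynomial matrices of degree $O(M)$, and recover the tropical product via evaluation/interpolation at $\Theta(M)$ points, yielding $O(M)$ small-entry rectangular multiplications. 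These are two standard views of the same trick (the paper's version is your polynomial encoding evaluated at $x=m+1$), and both land on $\tilde{O}(Mn^{\omega(1,r,1)})$, so the substance matches; one small detail worth noting in your version is that you should work modulo a prime $p>d$ (or over the integers with entries of $O(\log d)$ bits) during the $\Theta(M)$ scalar multiplications, so that each arithmetic operation stays at polylogarithmic cost, and this is indeed what makes the $\tilde{O}$ absorb the overhead.
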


We first define  $(\max, +)$-product and $(\min, +)$-product over matrices.


\begin{definition}[Distance products of matrices]
Let $A$ be an $n \times m$ matrix and $B$ be an $m \times n$ matrix.
The $(\max,+)$-product of $A$ and $B$, denoted by $A\star B$, is the $n \times n$ matrix $C$ whose elements are given by
\[
c_{ij} = \max_{1\leq k \leq m}{\{a_{ik} + b_{kj}\}},\quad \text{for } i,j\in [n].
\]
Similarly, the $(\min,+)$-product of $A$ and $B$ denoted by $A\ast B$ is the $n \times n$ matrix $C'$ whose elements are given by
\[
c'_{ij} = \min_{1\leq k \leq m}{\{a_{ik} + b_{kj}\}},\quad \text{for } i,j\in [n].
\]
We refer to either of the $(\min,+)$-product or the $(\max,+)$-product as a {\em distance product}.
\end{definition}

The distance product of an $n \times m$ matrix by an $m \times n$ matrix 
can be computed naively in $O(n^2m)$ time.
When $m=n$, the problem is equivalent to APSP (all pairs shortest paths) problem
in a directed graph with real edge weights, and the fastest algorithm known is a recent one by Chan and Williams~\cite{ChanW16}
that runs in $O\left(n^3 / 2^{\sqrt{\Omega(\log n)}}\right)$ time.
It is a prominent long-standing open problem whether a truly subcubic algorithm for this problem exists.
However, when the entries of the matrices are integers, we can convert distance products of matrices into standard algebraic products.
We use a technique by Zwick~\cite{Zwick02}. 

\begin{lemma}[Zwick~\cite{Zwick02}]\label{lem:Alon}
Given an $n \times m$ matrix $A=\{a_{ij}\}$ and an $m \times n$ matrix $B=\{b_{ij} \}$ such that $m = n^r$ for some $ r > 0$,
and all the elements of both matrices are integers from $[-M,M]$, their $(\min,+)$-product $C = A \ast B$ can be computed in $\tilde{O}(Mn^{\omega(1,r,1)})$ time.
\end{lemma}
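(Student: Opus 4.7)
The plan is to follow the standard exponent-encoding trick of Alon--Galil--Margalit (also used by Yuval and by Zwick), reducing the $(\min,+)$-product to a single ordinary integer matrix multiplication with entries of moderate bit length, and then invoking fast rectangular matrix multiplication to obtain the $n^{\omega(1,r,1)}$ exponent.

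First I would shift and flip to pass from $(\min,+)$ to $(\max,+)$ with non-negative entries. Set $a'_{ij} := a_{ij}+M$ and $b'_{ij} := b_{ij}+M$, so both matrices take values in $[0,2M]$, and the $(\min,+)$-product changes only by the additive constant $2M$. Then let $\hat a_{ij} := 2M-a'_{ij}$ and $\hat b_{ij} := 2M-b'_{ij}$; since $\hat a_{ik}+\hat b_{kj}=4M-(a'_{ik}+b'_{kj})$, the $(\max,+)$-product $\hat C$ of $\hat A,\hat B$ determines, and is determined by, the $(\min,+)$-product we want, and all new entries again lie in $[0,2M]$.

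Next I would encode $(\max,+)$ as ordinary addition in the exponent. Fix a base $\beta:=m+1$, and let $\tilde a_{ij}:=\beta^{\hat a_{ij}}$, $\tilde b_{ij}:=\beta^{\hat b_{ij}}$. The standard product $\tilde C=\tilde A\cdot\tilde B$ satisfies
\[
\tilde c_{ij}=\sum_{k=1}^{m}\beta^{\hat a_{ik}+\hat b_{kj}},
\]
and since at most $m<\beta$ summands can share the maximum exponent $\hat c_{ij}$, one has $\beta^{\hat c_{ij}}\le \tilde c_{ij}<\beta^{\hat c_{ij}+1}$. Hence $\hat c_{ij}=\lfloor \log_\beta \tilde c_{ij}\rfloor$ is read off in linear time in the bit length of $\tilde c_{ij}$, and from $\hat c_{ij}$ one recovers the desired $c_{ij}$ by undoing the shifts.

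The main technical step, and where I expect the real work to lie, is bounding the cost of this one integer multiplication. Each entry of $\tilde A$ and $\tilde B$ has $b=O(M\log m)$ bits, and each entry of $\tilde C$ also has $O(M\log m)$ bits. Multiplying an $n\times m$ integer matrix by an $m\times n$ integer matrix with $b$-bit entries reduces, by splitting each integer into blocks of $\Theta(\log n)$ bits and recombining the partial products by an FFT-based polynomial multiplication, to $\tilde O(b)$ rectangular matrix multiplications whose entries fit in a machine word. Each of these costs $O(n^{\omega(1,r,1)})$ word operations, for a total of $\tilde O(b\cdot n^{\omega(1,r,1)})=\tilde O(M\cdot n^{\omega(1,r,1)})$. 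The most delicate point is to verify that the accumulated carries across blocks do not inflate the arithmetic cost beyond the claimed bound, so that all the book-keeping overhead is absorbed by the $\tilde O(\cdot)$ notation; once this is in place, the rest is routine.
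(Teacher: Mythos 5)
Your proposal is correct and follows essentially the same approach as the paper: the Alon--Galil--Margalit/Yuval exponent-encoding trick in base $m+1$, one rectangular integer matrix multiplication, and recovery by $\lfloor\log_{m+1}\cdot\rfloor$. The only cosmetic differences are that the paper folds your shift-and-flip preprocessing directly into the exponent $M-a_{ij}$, and it accounts for the bit cost by charging each of the $O(n^{\omega(1,r,1)})$ arithmetic operations $\tilde O(M)$ via Sch\"onhage--Strassen on $O(M\log n)$-bit integers rather than via your block-decomposition bookkeeping; both yield $\tilde O(Mn^{\omega(1,r,1)})$.
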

%
With minor appropriate modifications, the $(\max, +)$-product of matrices $A$ and $B$ can be computed within the same time as in Lemma~\ref{lem:Alon}.

We now give an algorithm for computing all-pairs $L_\infty$ distances, by using the fast algorithm for computing $(\max,+)$-product over bounded integers.

\begin{lemma}
Let $S$ be a set of $n$ points $p_1,\ldots, p_n$ in $\reals^d$ such that $d=n^r$ for some $r > 0$,
and for all $i\in[n]$, $p_i[k]$ is an integer from the interval $[-M,M]$, for all $k\in[d]$.
Then the $L_\infty$ distances between all pairs of points $(p_i, p_j)$ from $S$ can be computed in $\tilde{O}(Mn^{\omega(1,r,1)})$ time.
\end{lemma}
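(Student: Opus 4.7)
The plan is to reduce the computation of all pairwise $L_\infty$ distances to a single $(\max,+)$-product of integer matrices, and then invoke (the $(\max,+)$-analogue of) Lemma~\ref{lem:Alon} to obtain the claimed time bound.

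The key identity is
\[
\|p_i - p_j\|_\infty = \max_{k\in[d]} |p_i[k] - p_j[k]| = \max_{k\in[d]} \max\bigl( p_i[k] - p_j[k],\; p_j[k] - p_i[k] \bigr).
\]
Define an $n \times d$ matrix $A$ by $A[i,k] = p_i[k]$ and a $d \times n$ matrix $B$ by $B[k,j] = -p_j[k]$. Both matrices have integer entries in $[-M,M]$. Then the $(\max,+)$-product $C = A \star B$ satisfies
\[
C[i,j] = \max_{k\in[d]} \bigl(A[i,k] + B[k,j]\bigr) = \max_{k\in[d]} \bigl(p_i[k] - p_j[k]\bigr),
\]
and therefore $\|p_i - p_j\|_\infty = \max\bigl(C[i,j],\, C[j,i]\bigr)$ for every pair $i \neq j$. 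Thus a single $(\max,+)$-product suffices to recover the entire matrix of pairwise $L_\infty$ distances, at an additional cost of $O(n^2)$ for the final pairwise maxima.

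The next step is to bound the cost of this $(\max,+)$-product. The $(\max,+)$-analogue of Lemma~\ref{lem:Alon} holds by symmetric reasoning: replacing each entry $x$ by $(m+1)^{x+M}$ (rather than $(m+1)^{M-x}$) converts a $(\max,+)$-product into an algebraic matrix product whose entries are integers with $O(M\log n)$ bits, and the largest exponent can be read off by a logarithm recovery as in the proof of Lemma~\ref{lem:Alon}. Applying this to $A$ and $B$, which are $n \times n^r$ and $n^r \times n$ respectively with entries in $[-M,M]$, yields a running time of $\tilde{O}\bigl(M n^{\omega(1,r,1)}\bigr)$, which dominates the $O(n^2)$ overhead for forming $A, B$ and extracting pairwise distances.

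There is no real obstacle here beyond the clean setup; the only point to verify is that $-p_j[k]$ still lies in $[-M,M]$ (obvious), and that the $(\max,+)$ version of Lemma~\ref{lem:Alon} indeed holds with the same $\tilde{O}(M n^{\omega(1,r,1)})$ bound, which was already noted in the paragraph immediately following Lemma~\ref{lem:Alon}.
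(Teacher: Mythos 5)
Your proof is correct and follows exactly the same route as the paper: form $A[i,k]=p_i[k]$ and $B=(-A)^T$, take a single $(\max,+)$-product, and recover $\|p_i-p_j\|_\infty = \max(C[i,j],C[j,i])$, with the cost dominated by the $(\max,+)$-analogue of Lemma~\ref{lem:Alon}. Your brief spelling-out of the $(\max,+)$ encoding $x\mapsto (m+1)^{x+M}$ is a welcome addition but does not constitute a different argument.
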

\begin{proof}
We create the $n \times d$ matrix $A = \{a_{ik}\}$ and the $d \times n$ matrix $B = (-A)^T = \{b_{ki}\}$, where
\begin{align*}
a_{ik} = p_i[k], \quad &\text{for } i\in [n],\, k\in [d] \\[5pt]
\medskip
b_{ki} = - p_i[k], \quad &\text{for } i\in[n],\, k\in[d].
\end{align*}

Now we compute the $(\max,+)$-product $C = A \star B$.
The matrix $L$ of all-pairs $L_\infty$-distances is then easily seen to be
\[
L[i,j]= \max\bigl\{C[i,j], C[j,i]\bigr\} = \left\|p_i - p_j \right\|_\infty,
\]
for every pair $i,j\in [n]$.

Clearly, the runtime is determined by computing the $(\max,+)$-product  $C = A\star B$. This is done as explained earlier, and achieves the required running time. 
\end{proof}

Consequently, by taking the minimum from the algorithm above, and the (say, deterministic) algorithm from Section~\ref{sec:closest_reals}, 
we obtain that for points in $\reals^d$ with integer coordinates from $[-M, M]$, where $d=n^r$ for some $r > 0$, we can find the $L_\infty$ closest pair in
\[
\tilde{O}\left(\min\left\{Mn^{\omega(1,r,1)},\, DP(n,d) \right\}\right) \text{  time},
\]
as stated in Theorem~\ref{thm:integers}.

\section{Dominance Products}\label{sec:dominance}

We recall the dominance product problem: given $n$ points $p_1, \ldots, p_n$ in $\reals^d$, we want to compute a matrix $D$ such that for each $i, j \in[n]$,
\[
D[i,j] = \Bigl| \{k \mid p_i[k] \leq p_j[k]\} \Bigr|. 
\] 
It is easy to see that the matrix $D$ can be computed naively in $O(dn^2)$ time.
Note that, in terms of decision tree complexity, it is straightforward to show that $O(d n\log n)$ pairwise comparisons suffice for computing the dominance product of $n$ points in $\mathbb{R}^d$.
However, the actual best known time bound to solve this problem is significantly larger than its decision tree complexity bound.

The first who gave a truly subcubic algorithm to compute the dominance product of $n$ points in $\reals^n$ is \Matousek{}~\cite{Matousek91}. 
We first outline his algorithm, and then present our extension and improved runtime analysis.

\begin{theorem}[\Matousek{}~\cite{Matousek91}]\label{thm:Matousek}
	Given a set $S$ of $n$ points in $\mathbb{R}^n$,
	the dominance matrix for $S$ can be computed in $O(n^{\frac{3+\omega}{2}}) = O(n^{2.687})$ time.
\end{theorem}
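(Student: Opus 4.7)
The plan is to follow Matoušek's classical blocking-and-matrix-multiplication scheme. First, for every coordinate $k\in[n]$, I would sort the $n$ points by their $k$th coordinate (breaking ties consistently, e.g.\ by point index, so that each coordinate induces a total order and the dominance count is well defined). All $n$ sortings cost $O(n^2\log n)$, which is negligible. After sorting, partition the sorted order for coordinate $k$ into $n/t$ consecutive blocks of $t$ points each, where $t$ is a parameter to be optimized. For every point $p_i$ and every coordinate $k$, let $c_k(i)\in[n/t]$ denote the index of the block that contains $p_i$ in the sorted order of coordinate $k$.

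The key observation is that for any pair $(i,j)$ and any coordinate $k$, the event $p_i[k]\le p_j[k]$ decomposes into two cases: either $c_k(i)<c_k(j)$ (so the contribution is automatic), or $c_k(i)=c_k(j)$ (and we must compare within the block). I would handle the two parts separately:

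\textbf{Cross-block part.} Define an $n\times(n\cdot n/t)$ matrix $A$ and an $(n\cdot n/t)\times n$ matrix $B$ whose columns and rows are indexed by pairs $(k,\beta)$ with $k\in[n]$, $\beta\in[n/t]$, by
\[
A\bigl[i,(k,\beta)\bigr]=\mathbf{1}[c_k(i)<\beta],\qquad
B\bigl[(k,\beta),j\bigr]=\mathbf{1}[c_k(j)=\beta].
\]
Then
\[
(AB)[i,j]=\sum_{k,\beta}\mathbf{1}[c_k(i)<\beta]\,\mathbf{1}[c_k(j)=\beta]=\sum_{k=1}^n\mathbf{1}[c_k(i)<c_k(j)],
\]
which is exactly the cross-block contribution. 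The product $AB$ is that of an $n\times(n^2/t)$ matrix by an $(n^2/t)\times n$ matrix; slicing the inner dimension into $n/t$ square $n\times n$ chunks and summing gives runtime $O\!\left(n^{\omega+1}/t\right)$.

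\textbf{Same-block part.} For every coordinate $k$ and every block in coordinate $k$, I would simply enumerate all $t^2$ ordered pairs inside the block and, for each pair $(i,j)$ with $p_i[k]\le p_j[k]$, add $1$ to $D[i,j]$. Each coordinate contributes $(n/t)\cdot t^2=nt$ operations, so the total cost of this step is $O(n^2 t)$.

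Adding the two parts and summing with the cross-block matrix product $(AB)$ yields $D$ in time $O(n^2 t + n^{\omega+1}/t)$. Balancing by choosing $t=n^{(\omega-1)/2}$ equalizes both terms at $O(n^{(\omega+3)/2})$, proving the theorem. The main technical obstacle is not really an algorithmic one but a bookkeeping issue: one must be careful that the consistent tie-breaking used while sorting each coordinate makes the indicator $\mathbf{1}[p_i[k]\le p_j[k]]$ coincide with $\mathbf{1}[c_k(i)<c_k(j)]$ plus the within-block comparison, and that the rectangular multiplication of $A$ by $B$ is genuinely exploited (rather than done naively), which is why the parameter $t$ is chosen as above so as to match the cost of the two stages.
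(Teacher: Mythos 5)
Your proposal is correct and follows essentially the same blocking-plus-matrix-multiplication scheme as the paper's proof: sort by each coordinate, split each sorted order into blocks of size $t=n^{(\omega-1)/2}$, count cross-block contributions via a Boolean matrix product whose inner dimension has $n/t$ chunks of size $n$ (giving $O((n/t)n^\omega)$), and count same-block contributions directly in $O(n^2t)$. The only cosmetic difference is how the indicator functions are distributed: the paper places $\mathbf{1}[\text{block of }p_i=k]$ in $A_k$ and $\mathbf{1}[\text{block of }p_j>k]$ in $B_k$, whereas you put $\mathbf{1}[c_k(i)<\beta]$ in $A$ and $\mathbf{1}[c_k(j)=\beta]$ in $B$; these telescope to the same count $\sum_k\mathbf{1}[c_k(i)<c_k(j)]$, so the two formulations are interchangeable.
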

\begin{proof}
	For each $j \in [n]$, sort the $n$ points by their $j$-th coordinate.
	This takes a total of $O(n^2 \log n)$ time.
	Define the {\em $j$-th rank of point} $p_i$, denoted as $r_j(p_i)$, to be the position of $p_i$ in the sorted list for coordinate $j$.
	Let $s \in [\log n, n]$ be a parameter to be determined later.
	Define $n/s$ pairs (assuming for simplicity that $n/s$ is an integer) of $n \times n$ Boolean matrices $(A_1, B_1),\ldots, (A_{n/s}, B_{n/s})$ as follows:
	\begin{align*}
	&A_k[i,j] = 
	\begin{cases}
	1  & \text{if }\, r_j(p_i) \in [ks, ks+s) \\
	0  & \text{otherwise}, 
	\end{cases}
	\medskip
	&B_k[i,j] = 
	\begin{cases}
	1 & \text{if } r_j (p_i) \geq ks+s \\
	0 & \text{otherwise},
	\end{cases}
	\end{align*}
	for $i,j \in [n]$.
	Put $C_k = A_k \cdot B^T_k$.
	Then $C_k[i,j]$ equals the number of coordinates $t$ such that $r_t(p_i) \in [ks, ks+s)$, and $r_t(p_j) \geq ks+s$.
	
	Thus, by letting 
	$
	C = \sum^{n/s}_{k=1} C_k
	$,
	we have that $C[i, j]$ is the number of coordinates $t$ such that $p_i[t] \leq p_j[t]$ and $\floor{r_t(p_i)/s} < \floor{r_t(p_j)/s}$.
	
	Next, we compute a matrix $E$ such that $E[i,j]$ is the number of coordinates $t$ such that $p_i[t] \leq ~ p_j[t]$ and $\floor{r_t(p_i)/s} = \floor{r_t(p_j)/s}$.
	Then $D := C + E$ is the desired dominance matrix.
	
	To compute $E$, we use the $n$ sorted lists we computed earlier.
	For each pair $(i, j) \in [n] \times [n]$, we retrieve $q:= r_j(p_i)$.
	By reading off the adjacent points that precede $p_i$ in the $j$-th sorted list in reverse order
	(i.e., the points at positions $q - 1$, $q - 2$, etc.), and stopping as soon as we reach a point $p_k$
	such that $\floor{r_j(p_k)/s} < \floor{r_j(p_i)/s}$,
	we obtain the list $p_{i_1}, \ldots, p_{i_l}$ of $l \leq s$ points such that
	$p_{i_x}[j] \leq p_i[j]$ and $\floor{r_j (p_i)/s} = \floor{r_j (p_{i_x})/s}$.
	For each $x = 1, \ldots, l$, we add a $1$ to $E[i_x, i]$.
	Assuming constant time lookups and constant time probes into a matrix (as is standard in the real RAM model),
	this entire process takes only $O(n^2s)$ time.
	The runtime of the above procedure is therefore $O(n^2s +\frac{n}{s}\cdot n^\omega)$.
	Choosing $s = n^{\frac{\omega-1}{2}}$, the time bound becomes $O(n^{\frac{3+\omega}{2}})$. 
\end{proof}

Yuster~\cite{Yuster09} has slightly improved this algorithm to run in $O(n^{2.684})$ time, by using rectangular matrix multiplication. 

\subsection{Generalized and Improved Bounds}\label{sec:improved_dominance}
We extend Yuster's idea to obtain bounds for dimension $d=n^r$, for the entire range $r > 0$,
and, at the same time, give an improved time analysis, using the recent bounds for rectangular matrix multiplications of Le Gall~\cite{Gall12, Gall12Arxiv} coupled with an interpolation technique. This analysis is not trivial, as Le Gall's bounds for $\omega(1,r,1)$ are obtained by a nonlinear optimization problem,
and are only provided for a few selected values of $r$ (see Table 1 in~\cite{Gall12Arxiv} and~\cite{Gall12}).
Combining Le Gall's exponents with an interpolation technique, similar to the one used by Huang and Pan~\cite{Huang98}, we obtain improved bounds for all values $d = n^r$,
for any $r > 0$.

Note that the matrices $A_k$ and $B_k$, defined above, are now  $n \times d$ matrices.
Thus, the sum $C$ defined earlier, can be viewed as a product of block matrices
\[
C =
\begin{bmatrix}
A_1 &A_2  &\cdots  &A_{n/s}
\end{bmatrix}
\cdot
\begin{bmatrix} 
B^T_1 \\[0.5ex]
B^T_2 \\
\vdots \\
B^T_{n/s}
\end{bmatrix}.
\]
Thus, to compute $C$ we need to multiply an $n \times \left( dn/s \right)$ matrix by a $\left(dn/s \right) \times n$ matrix.
Computing $E$ in this case can be done exactly as in \Matousek{}'s algorithm, in $O(nds)$ time.

Consider first the case where $d$ is small; concretely, $d \leq n^{\frac{\omega -1}{2}}$.
In this case we compute $C$ using the following result by Huang and Pan.
\begin{lemma}[Huang and Pan~\cite{Huang98}]\label{lem:hp}
	Let $\alpha = \sup\bigl\{0 \leq r \leq 1 \mid w(1,r,1) = 2 + o(1)  \bigr\}$. Then for all $n^\alpha \leq m\leq n$, 
	one can multiply an $n \times m$ matrix with an $m \times n$ matrix in time
	$
	O\left( m^{\frac{\omega -2}{1-\alpha}} n^{\frac{2-\omega\alpha}{1-\alpha}}  \right).
	$	
\end{lemma}
Huang and Pan~\cite{Huang98} showed that $\alpha > 0.294$.
Recently, Le Gall~\cite{Gall12, Gall12Arxiv} improved the bound on $\alpha$ to
$\alpha > 0.302$. By plugging this into Lemma~\ref{lem:hp},
we obtain that multiplying an $n \times m$ matrix with an $m \times n$ matrix, where  $n^\alpha \leq m\leq n$, can be done in time 
$
O(m^{0.535} n^{1.839}).
$ 

From the above, computing $C$ and $E$ can be done in
$O\left( (dn/s)^{0.535} n^{1.839} +dns \right)$ time.
By choosing $s = {n^{0.896}} / {d^{0.303}}$, the runtime is asymptotically minimized, and we obtain the time bound
$O(d^{0.697}n^{1.896})$. This time bound holds only when 
$n^\alpha < n^{0.302} \leq dn/s \leq n$, which yields the time bound
\[
O(d^{0.697}n^{1.896} +n^{2+o(1)}), \text{ for } d \leq n^{(\omega - 1) /2} \leq n^{0.687}.
\]

We now handle the case $d > n^{(\omega - 1) /2}$. Note that in this case,  $dn/s > n$ (for $s$ as above), thus, we cannot use the bound from Lemma~\ref{lem:hp}.
Le Gall~\cite{Gall12, Gall12Arxiv} gives a table (Table 1 in~\cite{Gall12Arxiv} and~\cite{Gall12}) of values $r$ (he refers to them as $k$),
including values of $r > 1$ (which is what we need), with various respective exponents $\omega(1,r,1)$.
We will confine ourselves to the given bounds for the values $r_1=1.1$, $r_2=1.2$, $r_3=1.3$, and $r_4=1.4$.
We denote their corresponding exponents $\omega(1,r_i,1)$ by $\omega_1 \leq 2.456151, \omega_2 \leq 2.539392$, $\omega_3 \leq 2.624703$, and $\omega_4 \leq 2.711707$ respectively.
The exponent for $r_0 = 1$ is $\omega_0 = \omega \leq 2.372864$ (see~\cite{Williams12, Gall14}).
\def\om{{\omega}}
\def\ze{{\zeta}}

\begin{table}
	\begin{center}
		\begin{tabular}{|c|c|c|}
			\hline
			$r$ & $\om$ & $\ze$ \\
			\hline
			$r_0 = 1.0$ & $\om_0 = 2.372864$ & $\ze_0 = 0.6865$ \\
			$r_1 = 1.1$ & $\om_1 = 2.456151$ & $\ze_1 = 0.7781$ \\
			$r_2 = 1.2$ & $\om_2 = 2.539392$ & $\ze_2 = 0.8697$ \\
			$r_3 = 1.3$ & $\om_3 = 2.624703$ & $\ze_3 = 0.9624$ \\
			$r_4 = 1.4$ & $\om_4 = 2.711707$ & $\ze_4 = 1.0559$\\
			\hline
		\end{tabular}
	\end{center}
	\caption{The relevant entries from Le Gall's table. The dominance product can be computed in $O(n^{\om_i})$ time, for dimension $d_i=n^{\ze_i}$.} 
	\label{table1}
\end{table}

The algorithm consists of two parts. For a parameter $s$, that we will fix shortly,
the cost of computing $C= A\cdot B^T$ is $O\left(n^{\om_r}\right)$,
where $\om_r$ is a shorthand notation for $\om(1,r,1)$, and where $n^r=dn/s$, 
and the cost of computing $E$ is $O(nds) = O\left(s^2n^r\right)$. Dropping the constants 
of proportionality, and equating the two expressions, we choose
$$
s = n^{(\om_r-r)/2} ,\quad\quad\text{that is,}\quad\quad
d = sn^{r-1} = n^{(\om_r+r)/2-1} = n^{\ze_r} ,
$$
for $\ze_r = (\om_r+r)/2-1$. Put $\ze_i = \ze_{r_i}$, for the values $r_0,\ldots,r_4$ mentioned earlier;
see Table~\ref{table1}.

Now if we are lucky and $d=n^{\ze_i}$, for $i=0,1,2,3,4$, then the overall cost of the algorithm is $O(n^{\om_i})$.
For in-between values of $d$, we need to interpolate, using the following bound, which is
derived in the earlier studies (see, e.g., Huang and Pan~\cite{Huang98}), and which asserts that,
for $a\le r\le b$, we have
\begin{equation} \label{interpol}
\om_r \le \frac{ (b-r)\om_a + (r-a)\om_b } {b-a} .
\end{equation}
That is, given $d = n^\ze$, where $\ze_i \le \ze \le \ze_{i+1}$, for some $i\in\{0,1,2,3\}$,
the cost of the algorithm will be $O\left(n^{\om_r}\right)$, where $r$ satisfies
$$
\ze = \ze_r = \frac{\om_r+r}{2}-1 .
$$
Substituting the bound for $\om_r$ from (\ref{interpol}), with $a=r_i$ and $b=r_{i+1}$, we have
$$
\frac{ (r_{i+1}-r)\om_i + (r-r_i)\om_{i+1} } {r_{i+1}-r_i} + r = 2(\ze+1) .
$$
Eliminating $r$, we get
\begin{equation} \label{eq:r}
r = \frac{ 2(\ze+1)(r_{i+1}-r_i) - r_{i+1}w_i + r_iw_{i+1} }
{w_{i+1}+r_{i+1}-w_i-r_i} ,
\end{equation}
and the cost of the algorithm will be $O\left(n^{\om_r}\right)$, where 
\begin{equation} \label{eq:omr}
\om_r \leq \frac{ (r_{i+1}-r)\om_i + (r-r_i)\om_{i+1} } {r_{i+1}-r_i} .
\end{equation}
Note that $r$ is a linear function of $\ze$, and so is $\om_r$. Writing $\om_r = u\ze+v$,
the cost is
\[
O\left(n^{\om_r}\right) = O\left(n^{u\ze+v}\right) = O\left(d^u n^v\right) .
\]

\begin{table}[t]
	\begin{center}
		\begin{tabular}{|c|c|c|c|}
			\hline
			$\ze_{\rm min}$ & $\ze_{\rm max}$ & $u$ & $v$ \\
			\hline
			$0.687$ & $0.87$ & $0.909$ & $1.75$ \\
			$0.87$ & $0.963$ & $0.921$ & $1.739$ \\
			$0.963$ & $1.056$     & $0.931$ & $1.73$\\
			\hline
		\end{tabular}
	\end{center}
	\caption{The time bound for computing dominance product for $n$ points in dimension $n^{\ze_{\rm min}} \leq d \leq n^{\ze_{\rm max}}$ is 
		$O\left(d^u n^v\right)$.}
	\label{table2}
\end{table}

The values of $u$ and $v$ for each of our intervals are given in Table~\ref{table2}.
(The first row covers the two intervals $1.0 \leq r \leq 1.1$ and $1.1 \leq r \leq 1.2$, as the bounds happen to coincide there.)
See also~(\ref{DP}) in Section~\ref{sec:results}.
We have provided explicit expressions for $DP(n,d)$ only for $d \leq n^{\ze_4} = n^{1.056}$, 
which includes the range $d \leq n$, which is the range one expects in practice. 
Nevertheless, the recipe that we provide can also be applied to larger values of $d$, using larger entries from Le Gall's table~\cite{Gall12,Gall12Arxiv}.
Dropping constant factors, we denote the time bound for computing the dominance product of $n$ points in $\reals^d$ by $DP(n,d)$;
see Theorem~\ref{thm:dominance} in Section~\ref{sec:results}.
by plugging the corresponding values of $0.302 < r < 1$ from Le Gall's Table 1 in~\cite{Gall12Arxiv}.
We also note that, for $d=n$, the time bound is $O(n^{2.6598})$, which improves Yuster's $O(n^{2.684})$ time bound mentioned above.



\bibliography{ClosestPair}

\begin{thebibliography}{10}

\bibitem{Ailon09}
Nir Ailon and Bernard Chazelle.
\newblock The fast {J}ohnson-{L}indenstrauss transform and approximate nearest
  neighbors.
\newblock {\em SIAM J. Comput.}, 39(1):302--322, 2009.

\bibitem{Andoni08}
Alexandr Andoni and Piotr Indyk.
\newblock Near-optimal hashing algorithms for approximate nearest neighbor in
  high dimensions.
\newblock {\em Commun. ACM}, 51(1):117--122, 2008.

\bibitem{Ben-Or83}
Michael Ben-Or.
\newblock Lower bounds for algebraic computation trees.
\newblock In {\em Proc. of the 15th Annu. ACM Sympos. on Theory of Computing
  (STOC)}, pages 80--86, 1983.

\bibitem{Bentley80}
Jon~Louis Bentley.
\newblock Multidimensional divide-and-conquer.
\newblock {\em Commun. ACM}, 23(4):214--229, 1980.

\bibitem{Bentley76}
Jon~Louis Bentley and Michael~Ian Shamos.
\newblock Divide-and-conquer in multidimensional space.
\newblock In {\em Proc. of the 8th Annu. ACM Sympos. on Theory of Computing
  (STOC)}, pages 220--230, 1976.

\bibitem{Chan1999}
T.~M. Chan.
\newblock Geometric applications of a randomized optimization technique.
\newblock {\em Discrete {\&} Computational Geometry}, 22(4):547--567, 1999.

\bibitem{ChanW16}
Timothy~M. Chan and Ryan Williams.
\newblock Deterministic {APSP}, orthogonal vectors, and more: Quickly
  derandomizing {R}azborov-{S}molensky.
\newblock In {\em Proc. of the 27th Annu. ACM-SIAM Sympos. on Discrete
  Algorithms (SODA)}, pages 1246--1255, 2016.

\bibitem{DuanP09}
Ran Duan and Seth Pettie.
\newblock Fast algorithms for (max, min)-matrix multiplication and bottleneck
  shortest paths.
\newblock In {\em Proc. of the 20th Annu. {ACM-SIAM} Sympos. on Discrete
  Algorithms (SODA)}, pages 384--391, 2009.

\bibitem{Fortune79}
Steve Fortune and John Hopcroft.
\newblock A note on {R}abin's nearest-neighbor algorithm.
\newblock {\em Inform. Process. Lett.}, 8(1):20--23, 1979.

\bibitem{FREDERICKSON}
Greg~N. Frederickson and Donald~B. Johnson.
\newblock The complexity of selection and ranking in x + y and matrices with
  sorted columns.
\newblock {\em Journal of Computer and System Sciences}, 24(2):197 -- 208,
  1982.

\bibitem{Fredman76}
M.~L. Fredman.
\newblock How good is the information theory bound in sorting?
\newblock {\em Theoret. Comput. Sci}, 1(4):355--361, 1976.

\bibitem{Huang98}
Xiaohan Huang and Victor~Y. Pan.
\newblock Fast rectangular matrix multiplication and applications.
\newblock {\em J. Complexity}, 14(2):257--299, 1998.

\bibitem{Indyk04}
Piotr Indyk, Moshe Lewenstein, Ohad Lipsky, and Ely Porat.
\newblock Closest pair problems in very high dimensions.
\newblock In {\em Proc. 31st International Colloquium on Automata, Languages
  and Programming (ICALP)}, pages 782--792, 2004.

\bibitem{Gall14}
Fran\c{c}ois {Le Gall}.
\newblock Powers of tensors and fast matrix multiplication.
\newblock In {\em Proc. 39th International Sympos. on Symbolic and Algebraic
  Computation (ISSAC)}, pages 296--303, 2014.

\bibitem{Gall12}
Fran{\c{c}}ois {Le Gall}.
\newblock Faster algorithms for rectangular matrix multiplication.
\newblock In {\em Proc. 53rd Annu. {IEEE} Sympos. on Foundations of Computer
  Science (FOCS)}, pages 514--523, 2012.

\bibitem{Gall12Arxiv}
Fran{\c{c}}ois {Le Gall}.
\newblock Faster algorithms for rectangular matrix multiplication.
\newblock {\em CoRR}, abs/1204.1111, 2012.

\bibitem{Matousek91}
Ji\v{r}\'{\i} Matou\v{s}ek.
\newblock Computing dominances in ${E}^n$.
\newblock {\em Inform. Process. Lett.}, 38(5):277--278, 1991.

\bibitem{Preparata85}
Franco~P. Preparata and Michael~I. Shamos.
\newblock {\em Computational Geometry: An Introduction}.
\newblock Springer-Verlag New York, NY, 1985.

\bibitem{Rabin76}
Michael Rabin.
\newblock Probabilistic algorithms.
\newblock In {\em Algorithms and Complexity, Recent Results and New
  Directions}, Academic Press, pages 21--39, 1976.

\bibitem{Shamos75}
Michael~Ian Shamos.
\newblock Geometric complexity.
\newblock In {\em Proc. of 7th Annu. ACM Sympos. on Theory of Computing
  (STOC)}, pages 224--233, 1975.

\bibitem{VassilevskaWY09}
Virginia Vassilevska, Ryan Williams, and Raphael Yuster.
\newblock All pairs bottleneck paths and max-min matrix products in truly
  subcubic time.
\newblock {\em Theory of Computing}, 5(1):173--189, 2009.

\bibitem{Williams12}
Virginia~Vassilevska Williams.
\newblock Multiplying matrices faster than {C}oppersmith-{W}inograd.
\newblock In {\em Proc. 44th Sympos. on Theory of Computing (STOC)}, pages
  887--898, 2012.

\bibitem{Yuster09}
Raphael Yuster.
\newblock Efficient algorithms on sets of permutations, dominance, and
  real-weighted {APSP}.
\newblock In {\em Proc. 20th Annu. {ACM-SIAM} Sympos. on Discrete Algorithms
  (SODA)}, pages 950--957, 2009.

\bibitem{Zwick02}
Uri Zwick.
\newblock All pairs shortest paths using bridging sets and rectangular matrix
  multiplication.
\newblock {\em J. ACM}, 49(3):289--317, 2002.

\end{thebibliography}


\end{document}